\newcommand{\defeq}{\mathrel{\mathop:}=}
\DeclarePairedDelimiter{\floor}{\lfloor}{\rfloor}
\newtheorem{theorem}{Theorem}[section]
\newtheorem{proposition}[theorem]{Proposition}
\newtheorem{remark}[theorem]{Remark}
\author[C.~Damian]{Camilla Damian$^\ast$}
\address{$^\ast$Institute of Statistics and Mathematical Methods in Economics, TU Wien}
\email{camilla.damian@tuwien.ac.at}
\author[R.~Frey]{R\"{u}diger Frey$^\dagger$}
\address{$^\dagger$Institute for Statistics and Mathematics, Vienna University of Economics and Business}
\email{ruediger.frey@wu.ac.at}
\date{\today}
\thanks{The first author was in part supported by the Austrian Science Fund (FWF, grant ZK 35 and grant Y 1235).}
\begin{document}

\begin{abstract}
In this paper, we focus on the estimation of historical volatility of asset prices from high-frequency data. Stochastic volatility models pose a major statistical challenge: since in reality historical volatility is not observable, its current level and, possibly, the parameters governing its dynamics have to be estimated from the observable time series of asset prices. To complicate matters further, recent research has analyzed the rough behavior of volatility time series to challenge the common assumption that the volatility process is a Brownian semimartingale. In order to tackle the arising inferential task efficiently in this setting, we use the fact that a fractional Brownian motion can be represented as a superposition of Markovian semimartingales (Ornstein-Uhlenbeck processes) and we solve the filtering (and parameter estimation) problem by resorting to more `standard' techniques, such as particle methods.
\end{abstract}

\title[]{Detecting Rough Volatility: A Filtering Approach}

\maketitle
\vspace{-0.5cm}
\begin{center}
\footnotesize \textbf{Keywords}: high-frequency data; rough volatility; nested particle filter.
\end{center}
\vspace{0.5cm}

\section{Introduction}\label{sec:IntrfBM}
It is well-known that, from a statistical point of view, one of the main challenges in the context of \textit{stochastic volatility} models is that of inference: since in reality historical volatility is not observable, its current level and, possibly, the parameters governing its dynamics have to be estimated from the observable time series of asset prices. In particular, the unobservability of volatility represents a crucial issue when modelling high-frequency data, i.e. data recorded on an irregular and remarkably small time scale.

To complicate matters further, recent research has analyzed the \textit{rough} behavior of volatility time series to challenge the assumption, common to most stochastic volatility models, that the volatility process is a Brownian semimartingale. For instance, \cite{volrough} use historical volatility proxy data to estimate the smoothness of the volatility process with a regression approach. Their empirical finding is that increments of log-volatility of several assets enjoy a monofractal scaling property: this, together with the well-established stylized fact that the distribution of such increments is approximately Gaussian, means log-volatility behaves essentially as a fractional Brownian motion. In particular, given their estimates of the smoothness parameter, these authors claim that log-volatility can be modeled as a fractional Brownian motion with Hurst exponent $H \approx 0.1$. This is in contrast to other fractional stochastic volatility models in the literature (see \cite{comte1998long}) -- typically assuming an Hurst index $H > \nicefrac{1}{2}$ -- and, more generally, to the quite widespread belief that volatility is a long-memory process.

A related work is that of \cite{bennedsen2021decoupling}, who focus on intraday volatility and, by means of an extensive empirical study, find it to be rough \textit{and} persistent. However, as explained in \cite{gneiting2004stochastic}, models based on processes with the self-similarity property, such as fractional Brownian motion, cannot account for both effects. Therefore, \cite{bennedsen2021decoupling} propose to use the so-called Brownian semistationary processes (\cite{barndorff2009brownian}) to model volatility instead, as this choice allows to decouple the short- and the long-term behavior of volatility. If we focus on their estimation of the roughness parameter, we find similarities with \cite{volrough} both in the linear regression approach\footnote{Although \cite{bennedsen2021decoupling} also present an estimation procedure based on a non-linear regression: this alternative can mitigate the bias arising from using a volatility proxy, which can affect the OLS estimate of the roughness parameter. However, in practice, the two estimators behave in a qualitatively similar fashion.} and in the results. In fact, the estimates for $H$, computed for a variety of assets, are qualitatively in agreement with those found by \cite{volrough}. 

\cite{fukasawa2019volatility}, however, apply the regression methodology of \cite{volrough} using the 5-minute realized volatility from simulated price paths and find that, given this sampling frequency, this approach gives a remarkable fit with $H \approx 0.1$ \textit{regardless} of the true value of $H$ used in their simulation. They attribute this finding to the use of a volatility proxy, claiming that it is the error in approximating spot volatility that results in an illusory scaling property. Prompted by these findings, these authors take such `proxy error' explicitly into account and develop a novel quasi-likelihood estimator of the Hurst exponent. In any case, when applying their estimator to real data, \cite{fukasawa2019volatility} find that volatility is indeed rough: remarkably, their estimates for $H$ for various stock indices are not only smaller than $0.5$, but even smaller than those of \cite{volrough}. On the other hand, further follow-up studies to \cite{volrough}, such as \cite{rogers2023things} and -- more recently -- \cite{cont2022rough}, find that `simpler' (meaning driven by a Brownian diffusion) volatility models can explain market features just as well as fractional ones. \cite{cont2022rough}, in particular, identify the cause of apparent roughness in realized volatility with microstructure noise rather than with an inherent, `true' roughness in (spot) volatility.

In this paper, the recent literature regarding \textit{rough} volatility, as well as its intrinsic unobservability, will be taken into account. As we will explain in the next section, we assume that the asset price process changes at discrete times (essentially, the times when new orders arrive to the market), which we model as jump times of a doubly-stochastic Poisson process. We take the intensity of the jump process to be driven by an unobservable signal and we explain how such intensity relates to asset price volatility. To account for roughness in volatility, the central idea is to consider the dynamics of the unobservable process to be driven by a fractional Brownian motion (fBM) with a Hurst index $H < \nicefrac{1}{2}$. This results in the volatility process being neither a semimartingale nor a Markov process, posing considerable challenges in the analysis of the model, in particular for what concerns the filtering and parameter estimation tasks. Even in the flexible context of a particle filtering approach, the non-Markovianity of the signal implies that sampling a particle at a given time instance requires calculations involving the entire trajectory up to that point. Clearly, the practical consequence is a great increase in computational cost and in memory allocation requirements over time.

Here, we circumvent this issue by employing a suitable approximation of the underlying non-Markovian volatility process. The starting point is the work of \cite{carmona1998fractional} and \cite{carmona2000approximation}: they have shown that one can represent an fBM with $H < \nicefrac{1}{2}$ as a superposition of infinitely many Ornstein-Uhlenbeck processes. We can exploit this idea to approximate the unobservable state process by means of a finite number of such processes, each of them being a Markovian semimartingale: this allows us to use a standard (recursive) particle filter to retrieve the signal; moreover, we can infer about model parameters -- in particular, the Hurst index -- by adapting the \textit{nested particle filter algorithm} of \cite{crisan2018nested} to our specific framework. In this way, we can reformulate the complex problem of determining the roughness of a hidden path in terms of a -- more easily approachable -- parameter estimation task. Simulation experiments indicate that this approach yields satisfactory results, both in terms of filtering and of parameter estimation, while still fully taking into account the intrinsic unobservability of the volatility process. Moreover, in the spirit of \cite{rogers2023things} and \cite{cont2022rough}, we apply our methodology to synthetic data generated in a setting where (spot) volatility is driven by a Brownian diffusion and assess whether the resulting estimates for $H$ are more consistent with such a setup, rather than mistakenly reflecting a spurious strong roughness effect.

The remainder of this paper is as follows. Section~\ref{section:MOD} introduces the model: motivated by a specific continuous-time modelling framework involving a fractional resp. Liouville Brownian motion with $H < \nicefrac{1}{2}$, we explain how such framework can be discretized in time and how such processes can be approximated to obtain a modelling framework suitable for the algorithm of \cite{crisan2018nested}, which is presented in Section~\ref{section:PF}. The numerical experiments are presented in Section~\ref{section:NUM}; in particular, we include an application of the proposed methodology to the ``alternative'' models of and \cite{cont2022rough} and \cite{rogers2023things}. Section~\ref{section:concl} concludes and illustrates possible extensions.

\section{Model}\label{section:MOD}
In this section, we discuss key ingredients of our model in detail. We start by explaining the motivation behind our specific model design choices for observation and state process in continuous time; then, we proceed to illustrate a suitable approximation of the underlying non-Markovian state process that will make it possible to tackle filtering and parameter estimation in our context. We conclude this section with the description of the discrete counterpart of our modelling framework, which we will need in order to apply the algorithm in Section~\ref{section:PF}.
\subsection{Motivation and Continuous-Time Framework}\label{sub:CTFram}
As mentioned in the introduction, our goal is to investigate the tasks of volatility estimation and inference in the context of a model which is suitable for high-frequency data and which accounts for \textit{roughness} and \textit{unobservability} of volatility dynamics. We can include these features in our setup by means of specific design choices concerning \textit{observation} and \textit{state} processes, respectively.

\subsubsection{Observations}
The inspiration for the modelling framework considered in this paper lies in \cite{freyrungg2001}, who also consider the problem of volatility estimation when observations consist of high-frequency data. In their model, a (logarithmic) asset price is assumed to change -- and to be observed -- only at random, discrete points in time in order to better mimic the characteristics of high-frequency data. These random times are considered to be the jump times of a marked point process whose intensity and jump-size distribution depend on the level of a hidden state process, closely related to asset price volatility; thus, volatility estimation amounts to a filtering problem with marked point process observations. This represents a departure from more commonly used diffusion models, such as the Heston model, and it makes for a more plausible modelling framework for asset prices observed on a very small time-scale. In particular, the dependence of the intensity on a hidden state variable introduces randomness in market activity over time.

Therefore, with the application to high-frequency observations in mind, we follow \cite{freyrungg2001} in that we model observed prices with a jump process, rather than with a continuous one. However, note that in this paper the plan is to consider \textit{event} data (that is, all transactions). As prices recorded at such ultra high frequency often move only by one or few ticks, the size of the jumps is rather uninformative; thus, we can simplify the filtering task by assuming that the unobservable signal influences only the intensity of the point process modelling the observed jump times. We will specify the form of the intensity in more detail in Equation~\eqref{eq:int} (continuous-time framework) resp. \eqref{eq:pcint} (discrete-time framework); moreover, we will explain the relationship between our model and a continuous model with stochastic volatility.

Another important difference between our setup and the original one from \cite{freyrungg2001} is that they assume the hidden state process to be a time-homogeneous Markov process, a setting which is not suitable for fractional or Liouville Brownian motion. Following the approach of \cite{kushner1977probability}, they obtain a recursive approximation to the optimal filter. In \cite{cvitanic2006filtering} these results are extended and, using a Bayesian approach, a recursive, closed-form optimal filter is obtained. Introducing \textit{rough} volatility amounts to considering the state process dynamics to be driven by a non-Markovian, non-semimartingale process, which prevents us from exploiting these results directly. The details on the specific form of the state process and on how to tackle the corresponding filtering problem will be the topic of the following sections.

\subsubsection{Hidden State}
As mentioned previously, it is crucial in this paper to choose a model which can account for volatility \textit{roughness}, in alignment with results in recent literature. In our framework, this can be achieved by considering the dynamics of the hidden state process to be driven by a fractional Brownian motion with Hurst index $H < \nicefrac{1}{2}$, or by a process `close' to it.

Note that, in such a rough framework, it is not possible to use `classical' filtering results directly because the state process is neither a semimartingale nor a Markov process. Some theoretical results regarding nonlinear filtering for fractional Brownian motion (both in the state and in the observation process) have been obtained for example by \cite{decreusefond1998fractional} and \cite{coutin1999abstract}, but they are restricted to the case $H \geq \nicefrac{1}{2}$ and thus not appropriate for the \textit{rough} volatility setting considered here. However, this issue can be bypassed by employing a suitable approximation of the underlying non-Markovian volatility process, as we will describe in detail in Section~\ref{sub:approx}.

\subsubsection{Continuous-Time Modelling Framework}
Now that we have introduced the main ingredients of the model, we can describe the corresponding framework. We will generically denote by $X$ the hidden state process, defined on some underlying filtered probability space  $(\Omega, \mathcal{F}, \mathbb{F}, \mathbb{P})$, where $\mathbb{F} = (\mathcal{F}_t )_{0 \leq t \leq T}$ satisfies the usual conditions. Later, in Section~\ref{sub:approx}, we will explain how $X$ is constructed to approximate a fractional resp. Liouville Brownian motion with Hurst index $H < \nicefrac{1}{2}$; in particular, we will see how $X$ is in fact a functional of a finite number of Ornstein-Uhlenbeck process driven by the same Brownian motion.

Then, in continuous time, our observation process is a Cox process denoted by $D$ for which we assume an intensity of the form
\begin{equation}\label{eq:int}
\lambda_t \defeq \lambda(X_t) = b \cdot \exp\left(X_t\right) \, ,
\end{equation}
where $b$ is a positive constant.

\subsubsection{Relationship to Stochastic Volatility}
Now that the continuous-time model has been introduced, we shall detail the relationship between the intensity of $D$ and stochastic volatility. To this, we assume that a stock price $S$ is given by the following doubly-stochastic compound Poisson process
\begin{equation}\label{eq:stock}
S_t = S_0 + \sum_{i = 1}^{D_t} \nu_i, \quad \{\nu_i\}_{i \in \mathbb{N}} \, \text{i.i.d.}, \quad \mathbb{E}\{\nu_i\} = 0, \quad \text{Var}(\nu_i) = \sigma^2 \, ,
\end{equation}
where $\sigma > 0$ and $D$ is a doubly-stochastic Poisson process with intensity $\lambda(X_t) = b \cdot \exp\left(X_t\right)$, for a Gaussian process $X$. Moreover, $D$ and $\{\nu_i\}_{i \in \mathbb{N}}$ are independent.

Note that, when the observations consist of event data, the $\{\nu_i\}_{i \in \mathbb{N}}$ are small and bounded as prices move only by one or few ticks at a time; at the same time, there are many jumps (that is, there are frequent trades), so that $b > 0$ should be large. Hence, one expects that the model~\eqref{eq:stock} is close to a stochastic volatility model with instantaneous variance given by $\sigma_t^2 = \sigma^2 \lambda(X_t)$. This concept is formalized in Proposition~\ref{prop:intvol} in Appendix~\ref{app}.

Now we shall detail how $X$ can be chosen such that the modelling framework accounts for roughness in volatility dynamics, while being fairly tractable in the sense that the filtering and parameter estimation tasks for $X$ can be performed via more `standard' methods.

\subsection{Approximation of fBM}\label{sub:approx}
In this section, we will explain how we can use some results from \cite{carmona1998fractional} and \cite{carmona2000approximation} to approximate fractional Brownian motion (resp. Liouville Brownian motion), as well as how this is helpful in our context.

More specifically, one considers the Mandelbrot-Van Ness representation of a fractional Brownian motion $W^H$, which is given by
\begin{equation}\label{eq:MVN}
W^H_t = c_H \int_0^t (t - s)^{H - \frac{1}{2}} \, dB_s
+ c_H \int_{-\infty}^0 \left((t - s)^{H - \frac{1}{2}} - (-s)\right)^{H - \frac{1}{2}}\, dB_s
\end{equation}
where $B$ is a two-sided Brownian motion and $c_H$ a constant depending on $H$. In particular, the choice
\begin{equation}\label{eq:cH}
c_H = \sqrt{\frac{\pi H (2H - 1)}{\Gamma(2 - 2H)\Gamma(H + \nicefrac{1}{2})^2 \sin(\pi(H - \nicefrac{1}{2}))}}
\end{equation}
ensures that the autocovariance function of $W^H$ is given by 
\begin{equation}\label{eq:autocovf}
\mathbb{E}\{W_t^H W_s^H\} = \frac{1}{2} \left\{|t|^{2H} + |s|^{2H} - |t - s|^{2H}\right\} \, .
\end{equation}

Similarly, here we introduce the so-called Liouville Brownian motion $V^H$ as
\begin{equation}\label{eq:LiouBM}
V^H_t = c_H \int_0^t (t - s)^{H - \frac{1}{2}} \, dB_s \, .
\end{equation}

Now, if $H < \nicefrac{1}{2}$, it has been shown in \cite{carmona1998fractional} and \cite{carmona2000approximation} that expressing $u \mapsto u^{H - \frac{1}{2}}$ as a Laplace transform
\[
u^{H - \frac{1}{2}} \propto \int_0^\infty  e^{-xu} x^{-H - \frac{1}{2}} \, dx
\]
and applying stochastic Fubini theorem results in a Markovian representation. In particular, we can express \eqref{eq:MVN} as
\[
W^H_t = \int_0^\infty \int_0^t e^{-x(t - s)} \, dB_s \, \mu(dx) + \int_0^\infty \left(e^{-xt} - 1\right) \int_{-\infty}^0 e^{sx}  \, dB_s \, \mu(dx) \, ,
\]
where
\begin{equation}\label{eq:mu}
\mu(dx) = c_H \frac{x^{-H - \nicefrac{1}{2}}}{\Gamma(\nicefrac{1}{2} - H)} \, dx.
\end{equation}

Denote by $Z^x$ the Ornstein-Uhlenbeck process $\int_0^t e^{-x(t - s)} \, dB_s$, starting at zero, and set $Q_0^x \defeq \int_{- \infty}^0 e^{xs} \, dB_s$, so that
\[
W_t^H = \int_0^\infty Z_t^x \, \mu(dx) + \int_0^\infty \left(e^{-xt} - 1\right) Q_0^x \, \mu(dx)
\]
and
\[
V_t^H = \int_0^\infty Z_t^x \, \mu(dx) \, .
\]

\subsubsection*{Spatial Discretization} This idea can be exploited in the context of this paper as follows. For a fixed Hurst parameter $H < \nicefrac{1}{2}$, we will approximate the corresponding fractional resp. Liouville Brownian motion by a finite sum of Ornstein-Uhlenbeck processes (all driven by the same Brownian motion). This, in turn, will allow us to perform filtering for the unobservable state process by employing more `standard' techniques, such as particle filtering. The core idea is to approximate the measure $\mu$ in \eqref{eq:mu} by a finite sum of Dirac measures; that is, for some $J \in \mathbb{N}$, $\mu \approx \sum_{j = 1}^{J} c_j \delta_{\kappa_j}$ for positive coefficients $(c_j)_{j = 1, \dots, J}$ and positive mean-reversion speeds $(\kappa_j)_{j = 1, \dots, J}$. 

In particular, similarly to \cite{carmona1998fractional} and \cite{carmona2000approximation}, we perform the following spatial discretization. Given $H \in (0, \nicefrac{1}{2})$, fix $J \in \mathbb{N}$ and consider $[\xi_0, \xi_J]$, a compact subset of $(0, \infty)$. Split this interval into subintervals by auxiliary terms $0 < \xi_0 < \xi_1 < \dots < \xi_J < \infty$ and compute, for $j = 1, \dots, J$,
\[
c_j = \int_{\xi_{j - 1}}^{\xi_j} \mu(dx) \quad \quad \text{and} \quad \quad \kappa_j = \frac{1}{c_j} \int_{\xi_{j - 1}}^{\xi_j} x\mu(dx) \, .
\]
With a slight abuse of notation, now and in what follows we write $Z^j$ and $Q_0^j$ in place of $Z^{\kappa_j}$ resp. $Q_0^{\kappa_j}$.

An approximation of fractional Brownian motion is then given by:
\begin{equation}\label{eq:finapproxWH}
W_t^H \approx \sum_{j = 1}^J c_j \left(Z_t^j + \left(e^{-\kappa_j t} - 1\right) Q_0^j\right) \,,
\end{equation}
where $\left(Q_0^j\right)_{j = 1}^J$ can be easily simulated, since $\left(Q_0^x\right)_{x > 0}$ is a centered Gaussian process with covariance function $\Gamma(x, y) = \frac{1}{x + y}$.

In this paper, we will instead consider the following approximation of the Liouville Brownian motion $V^H$ of Equation~\eqref{eq:LiouBM}. This approximation is denoted by $X$, that is
\begin{equation}\label{eq:finapprox}
V_t^H \approx X_t \defeq \sum_{j = 1}^J c_j Z_t^j \,,
\end{equation}
as this eases the exposition and notation in what follows.
\subsubsection*{Quality of the Approximation} Several papers study the accuracy and the convergence of the approximation of fractional Brownian motion via a finite number of OU processes: other than \cite{carmona1998fractional} and \cite{carmona2000approximation}, we refer the readers to \cite{harms2019strong} for an approach involving quadrature rules and to \cite{coutin2007approximation} for further results concerning also the temporal approximation.

A standard choice for auxiliary terms is a geometric partition of ratio $r \in (1, 2)$. The quality of the approximation obviously depends on the chosen $J$ and $r$. Numerical experiments, in agreement with theoretical results, suggest that for large $J$ one should choose a value of $r$ close to $1$. However, if in practical applications we choose to keep the dimension of the approximation manageable, numerical experiments indicate that we should increase the value of $r$. In particular, for a given natural number $J > 16$, here we fix a compact $[\xi_0, \xi_J]$ and split it into subintervals using a geometric partition, with ratio $\left(\frac{\xi_J}{\xi_0}\right)^{\nicefrac{1}{J}}$. Typically, one would want to choose a smaller value of $\xi_0$ for $H$ close to $0.5$, and a larger value of $\xi_J$ for $H$ close to $0$. Writing $\alpha = H + \nicefrac{1}{2}$, we can, for example, choose the values $\xi_0 = J^{-2\alpha}$ and $\xi_J = J^{4 - 2\alpha}$, so that $\left(\frac{\xi_J}{\xi_0}\right)^{\nicefrac{1}{J}} = J^{\nicefrac{4}{J}}$. Thus, with careful choices for $\xi_0$ and $\xi_J$, we can fix $J$ and construct a partition which is in line with what observed above and performs reasonably well numerically also for modest $J > 16$.

\begin{remark}\label{rmk:rmk2}
In practice, note that the choice of $J$ will also depend on the fineness of the time scale on which we want to approximate $W^H$ resp. $V^H$. Fix a time horizon $T > 0$ and consider the partition $0 = t_0 < t_1 < \dots < t_N = T$ of $[0, T]$ with a (for simplicity assumed constant) step size $\Delta = \nicefrac{T}{N}$. Then, for smaller value of $\Delta$ resp. higher values of $N$, we need to increase $J$ to still obtain a similarly satisfactory approximation accuracy as on a coarser time grid. \cite{coutin2007approximation} study the accuracy and the convergence of the approximation of fractional Brownian motion via a finite number of OU processes taking also into account the effect of the temporal approximation. With their results in mind, we will make the following choices concerning the number of OU processes building the approximations \eqref{eq:finapproxWH} resp. \eqref{eq:finapprox}.
\begin{itemize}
\item[(i)] Given $N$ and $H < \nicefrac{1}{2}$, we take
\begin{equation}\label{eq:Jchoicei}
J \defeq J(N, H) = \floor{2 \cdot N^\zeta \cdot \log(N)} \, , \quad \zeta = \log(1 + H) \, ,
\end{equation}
so that $\zeta < H$ as $H \in (0, \nicefrac{1}{2})$.
\item[(ii)] If we do not want the dimension of the approximation to depend on $H$,\footnote{When implementing the nested particle filter of of Section~\ref{sub:NPF}, in order to be able to update the particles in the recursive step, the dimension of the OU approximation corresponding to a jittered parameter has to be the same as the dimension of the OU approximation corresponding to the `previous', unjittered parameter. This works quite well as, with our choice of jittering kernel, jittered parameters are obtained as small perturbations of the existing particles in the parameter space; however, it might be more consistent to keep the dimension fixed from the beginning and for all particles. This also allows us to work with tridimensional arrays in \textsf{R}.}
\begin{equation}\label{eq:Jchoiceii}
J \defeq J(N) = \floor{2 \cdot N^\zeta \cdot \log(N)} \, , \quad \zeta = \log(1 + 0.25) \, .
\end{equation}
\end{itemize}

\end{remark}
\subsection{Discrete-Time Modelling Framework}\label{sub:DTmodel}
In order to solve the estimation and inference problems resulting from our modelling setup, one possibility is to use the \textit{nested particle filtering algorithm} introduced by \cite{crisan2018nested}. In order to fit our setup within their framework, we must consider a time-discretized version of our model. To this, we fix a time horizon $T > 0$ and consider the partition $0 = t_0 < t_1 < \dots < t_N = T$ of $[0, T]$ with a (for simplicity assumed constant) step size $\Delta = \nicefrac{T}{N}$. We only consider changes in the unobservable state process at these discrete time points, which implies that the intensity has a piecewise-constant form. In particular, we assume
\begin{equation}\label{eq:pcint}
\lambda_u = b \cdot \exp\left(X_{t_{n - 1}}\right)
\end{equation}
for all $u \in (t_{n - 1}, t_n]$, $n = 1, \dots, N$. As long as $\Delta$ is small enough, this assumption is not too restrictive as the choice \eqref{eq:pcint} still allows to mimic true market behavior reasonably well.

We will use the following notation for the discretized model. We write $X_n = X_{t_{n - 1}}$ for the random sequence corresponding to the underlying state process, $\lambda_n$ for the corresponding intensity and, for what concerns observations, we consider the sequence given by $y_n = D_{t_n} - D_{t_{n - 1}}$, $n = 1, \dots, N$. Note that the piecewise-constant form of the intensity implies that $y_n$ is Poisson distributed; in particular we have
\begin{equation}\label{eq:poisln}
y_n \sim \text{Pois}(\lambda_n \Delta) \,.
\end{equation}
\subsubsection*{Updating Scheme}
To conclude this section, note that the particle filtering algorithms of Section~\ref{section:PF} require us to `update' the value of $J$ Ornstein-Uhlenbeck processes building the approximation \eqref{eq:finapprox}; that is, given their values at some time $t_{n - 1}$, we need to calculate their values at some later time $t_n$. Since each of these processes starts at and reverts to zero and has unit variance, the dynamics of each $Z^j$, $j = 1, \dots, J$ are given by
\[
dZ_t^j = - \kappa_j Z_t^j \, dt + dB_t \, .
\]
As explained in \cite{gillespie1996exact}, an Euler-Maruyama discretization of these dynamics is accurate only for a suitably small time-discretization step $\Delta$. In particular, $\Delta$ should be much smaller than the reciprocal of the mean reversion speed, which is clearly problematic in our context, given the high values of some of the $\kappa_j$'s in the approximation \eqref{eq:finapprox}. \cite{gillespie1996exact} explains how to circumvent this problem by using so-called \textit{exact updating formulae}, which allow us to calculate the value of $Z_{t_n}^j$ given $z \defeq Z_{t_{n - 1}}^j$ as
\begin{equation}\label{eq:updateOU}
Z_{t_n}^j = z e^{-\kappa_j \Delta} + \sqrt{\frac{1 - e^{-2\kappa_j \Delta}}{2\kappa_j}} \, v \,
\end{equation}
where $v$ is sampled from a standard normal distribution.

It is important to note that, since all the OU processes in \eqref{eq:finapprox} are driven by the same Brownian motion, we only need to generate a single standard normal random variable to update all of their values from time $t_{n-1}$ to time $t_n$, making for an efficient implementation of the particle filtering algorithms of Section~\ref{section:PF}.

Once again, we will stay within the discretized framework; therefore, in what follows we will consider the random sequence indexed by $n = 1, \dots N$ which we obtain by setting $Z_n^j = Z_{t_{n - 1}}^j$. 
\section{Nested Particle Filter}\label{section:PF}
The goal of this paper is not only to perform the filtering task, but also to estimate model parameters (in particular, $H$). To this, we rely on a \textit{nested particle filter} approach, as described in \cite{crisan2018nested}: the authors introduce a nested structure, employing two layers of particle filters, for the approximation of the joint posterior distribution of the signal and the unknown (static) parameters.

\subsubsection*{Main Assumptions for the Applicability of the Algorithm}
To determine the applicability of the algorithm in a specific framework, two main assumptions identified in \cite{crisan2018nested} must be satisfied. The first assumption states that the parameter space must be a compact set and that the conditional p.d.f. of the observations must be well-behaved (positive and upper bounded) uniformly over that set.  We will assume that the parameter space of our parameter of interest, the Hurst index $H$, is a compact subset of $(0, \nicefrac{1}{2})$, so that this assumption is satisfied.

Secondly and most importantly, the optimal filter for the model of interest must be continuous with respect to the parameter; that is, small changes in the parameter should lead only to small changes in the posterior of the state given the observations. This assumption is crucial to obtain a recursive algorithm. In fact, at a given time instance, the key quantity required to approximate the posterior measure of the parameter is its marginal likelihood. At time $n$, consider the sampled parameter $\bar{\theta}_n$ obtained, for instance, as a random mutation of $\theta_{n-1}$ through jittering: one could approximate the likelihood of $\bar{\theta}_n$ by running a standard bootstrap filter conditional on $\Theta = \bar{\theta}_n$ from time $0$ to time $n$. However, this non-recursive approach would dramatically increase the computational cost with time. Due to the continuity assumption, one can instead use the particle approximation of the filter corresponding to $\theta_{n-1}$ as a particle approximation of the filter corresponding to $\bar{\theta}_n$. This is crucial to approximate both the predictive measure and the likelihood of $\bar{\theta}_n$ by exploiting the `mismatched' parameter value $\theta_{n-1}$ instead, and it ensures the recursive property of the algorithm. We discuss this continuity property in our setting in Appendix~\ref{app} (see, in particular, Proposition~\ref{prop:cont}).

\subsection{`Inner' (Bootstrap) Filter}\label{sub:BF}
To fix notation and describe how the approximation of fBM can be exploited in the given context, in this section we review the standard particle filter (so-called \textit{bootstrap} filter); this represents the ``inner'' filter of the two nested layers of particle filters described in \cite{crisan2018nested} and it yields, at each time point $n$, an approximation $\pi_n$ of the posterior measure for the state, conditional on the observations up to the time point $n$ and the given (known or sampled) parameters.

We apply a bootstrap filter with $M$ particles in the context of the discrete-time version of our model, as described at the end of Section~\ref{sub:CTFram}. Regarding observations, we record the values of the observed realization of the counting process $D$ on a grid of step size $\Delta = \nicefrac{T}{N}$ to then obtain the observation sequence $(y_n)_{1 \leq n \leq N}$. Here, we consider both the constant $b$ and the Hurst index $H < \nicefrac{1}{2}$ to be known; then, using \eqref{eq:Jchoicei} we compute the number $J = J(H, N)$ of OU processes needed for the approximation. Finally, note that in the following we write $\bm{Z}_n \defeq (Z_n^1, \dots, Z_n^J)^T$.

\begin{enumerate}\setcounter{enumi}{0}
\item \textbf{Initialization} ($n = 0$)
\begin{itemize}
\item[-] Compute coefficients and mean-reversion speeds for the approximation \eqref{eq:finapprox}.
\item[-] For each $m = 1, \dots, M$, initialize: sample $\bm{Z}_0^m$ from a suitable prior distribution.\footnote{In our simulation studies, we will typically choose a centered $J$-dimensional normal distribution with a small variance.}
\end{itemize}
\item \textbf{Recursive step} (from $n - 1$ to $n$)

For $m = 1, \dots, M$, let $\{\bm{Z}_{n - 1}^m\}_{1 \leq m \leq M}$ be the particles (Monte Carlo samples) available at time $n - 1$. At time $n$:
\begin{itemize}
\item[(a)] \underline{Update}: for $m = 1, \dots, M$, generate $v^m \sim \mathcal{N}(0, 1)$ and draw $\bm{\bar{Z}}_n^{m}$ conditional on $\bm{Z}_{n - 1}^{m}$, see also \eqref{eq:updateOU}.
\item[(b)] \underline{Compute normalized weights}: for $m = 1, \dots, M$, compute (see \eqref{eq:finapprox})
\[
\bar{X}_n^m = \sum_{j = 1}^J c_j \bar{Z}_n^{j, m} \, ,
\]
the corresponding intensity $\lambda_n^m$ and the corresponding likelihood weight  (see also \eqref{eq:poisln})
\[
w_{t + \Delta}^m \propto (\lambda_n^m \Delta)^{y_n} \cdot \exp\left(-\lambda_n^m \Delta\right) \, .
 \]
Normalize the weights to obtain $\bar{w}_{t + \Delta}^m = w_{t + \Delta}^m / \sum_{m = 1}^M w_{t + \Delta}^m$.
\item[(c)] \underline{Resample}: For $m = 1, \dots, M$, let $\bm{Z}_n^m = \bar{\bm{Z}}_n^q$ with probability $\bar{w}_{t + \Delta}^q$, $q \in \{1, \dots, M\}$.
\end{itemize}
\end{enumerate}

\subsection{Nested Particle Filter and Parameter Estimation}\label{sub:NPF}
Consider the setting of the previous section, but let $\Theta$ be the vector of model parameters to be estimated. In our case, we aim at estimating $\Theta = H$. The estimation of $b$ will not be addressed in the specific context of this model, as it can typically be chosen to match the average number of price movements observed in one (continuous-time) unit. The nested particle filter for the approximation, at each discrete time point $n$, of the posterior distribution of the parameter is given below.

Other than the time step size $\Delta = \nicefrac{T}{N}$ and the recorded observations $(y_n)_{1 \leq n \leq N}$, the required inputs are the value of constant $b$ and the number of particles $K \cdot M$. For the parameter, an appropriate prior distribution has to be chosen: we use a uniform distribution over $(0, \nicefrac{1}{2})$.

\begin{enumerate}\setcounter{enumi}{0}
\item \textbf{Initialization} ($n = 0$)
\begin{itemize}
\item[-] Draw $K$ i.i.d. samples $\theta_0^k$, $k = 1, \dots, K$ from the prior distribution $\pi_0^\theta(d\theta)$. Then given the total number of time steps $N$, compute the corresponding $J^k = J(N, \theta_0^k)$, $k = 1, \dots, K$.
\item[-] For each $m = 1, \dots, M$ and for each $k = 1, \dots, K$, initialize: sample $\bm{Z}_0^{(k, m)}$ from a suitable prior distribution.
\end{itemize}
\item \textbf{Recursive step} (from $n - 1$ to $n$)

For $k = 1, \dots, K$, let $\left(\theta^k, \left\{\bm{Z}_n^{(k, m)}\right\}_{1 \leq m \leq M} \right)$ be the particle set available at time $n - 1$. At time $n$:
\begin{itemize}
\item[(a)] \underline{Jittering and state propagation}: for each $k = 1, \dots, K$, perform the following steps.
\begin{itemize}
\item[-] Draw $\bar{\theta}_n^k$ from $\kappa_K^{\theta_{n - 1}^k} (d\theta)$. This corresponds to \textit{jittering} the samples in the parameter space: for possible ways to perform such step, we refer to Section 4.2 of \cite{crisan2018nested}.
\item[-] For each $\bar{\theta}_n^k$, $k = 1, \dots, K$, compute the corresponding coefficients and mean-reversion speeds for the approximation \eqref{eq:finapprox}.
\item[-] Given $\bar{\theta}_n^k$, perform Step 2(a) of the standard particle filter described in the previous section to obtain $\bar{\bm{Z}}_n^{(k, m)}$ (and $\bar{X}_n^{(k, m)}$), $m = 1, \dots, M$.
\item[-] Compute the approximate likelihood of the parameter $\bar{\theta}_n^k$, given by
\[
u_n^M(\bar{\theta}_n^k) = \frac{1}{M} \sum_{m = 1}^M L_{\bar{\theta}_n^k} \left(\bar{\bm{Z}}_n^{(k, m)}\right)\,,
\]
where $L_{\bar{\theta}_n^k} \left(\bar{\bm{Z}}_n^{(k, m)}\right)$ correspond to the (Poisson) likelihood given $\bar{\theta}_n^k$.
\item[-] Perform steps 2(b) and 2(c) of the standard particle filter described in the previous section to update and obtain $\widetilde{\bm{Z}}_n^{(k, m)}$, $m = 1, \dots, M$.
\end{itemize}
\item[(b)] \underline{Compute normalized weights}: for each $k \in\{1, \dots, K\}$, compute normalized weights $\bar{w}_n^k \propto u_n^M(\bar{\theta}_n^k)$.
\item[(c)] \underline{Resample}: For $k = 1, \dots, K$, let
\[
\left(\theta_n^k, \left\{\bm{Z}_n^{(k, m)}\right\}_{1 \leq m \leq M} \right) = \left(\bar{\theta}_n^l, \left\{\widetilde{\bm{Z}}_n^{(l, m)}\right\}_{1 \leq m \leq M} \right)
\]
with probability $\bar{w}_n^l$, $l \in \{1, \dots, K\}$. Furthermore, for each resampled $\theta_n^k$, $k = 1, \dots, K$, we take the corresponding $J^k$.
\item[(d)] Approximate the posterior measure by $\mu_n \approx \frac{1}{K} \sum_{k = 1}^K \delta_{\theta_n^k}$.
\end{itemize}
\end{enumerate}

\section{Numerical Results}\label{section:NUM}
In this section, we present the results of a simulation study aimed at testing the accuracy of the presented algorithms in the context of our model, as well as an application of the proposed methodology to the ``alternative'' models of \cite{cont2022rough} and \cite{rogers2023things}.

\subsection{Simulation}\label{section:SIM} For the simulation experiments of this subsection, we fix a time horizon $T = 1$ and a small time step of size $\Delta = \nicefrac{1}{960}$; that is, we consider half-minute intervals in a trading day of $8$ hours. Except when analyzing the impact of the `informativeness' of the observation process, we set $b = 8000$ as the average amount of price changes in a given day -- a relatively conservative choice for a liquid stock.

\subsubsection*{Bootstrap Filter for Known $H$}\label{subsec:BF}
Here we consider the Hurst index $H$ to be known and thus we implement the bootstrap filter of Section~\ref{sub:BF} with $M = 600$ particles. We consider two cases for the true Hurst index, $H = 0.1$ and $H = 0.4$; given the total number of time steps $N = 960$, the number of OU processes building the approximation \eqref{eq:finapprox} is given by Equation~\eqref{eq:Jchoicei} as $J(960, 0.1) = 26$ and $J(960, 0.4) = 138$, respectively.

Figure~\ref{filt0104} shows simulated trajectories of Liouville Brownian motion (gray lines) with Hurst index $H = 0.1$ (upper panel) resp. $H = 0.4$ (lower panel), as well as the corresponding filtered estimates (black lines). We observe that the filtered trajectory tracks the true one reasonably well in both cases, particularly if we take into account the small time scale considered.

\begin{figure}[h]
\begin{center}
\centering
\captionsetup{justification=centering}
\includegraphics[scale=0.4]{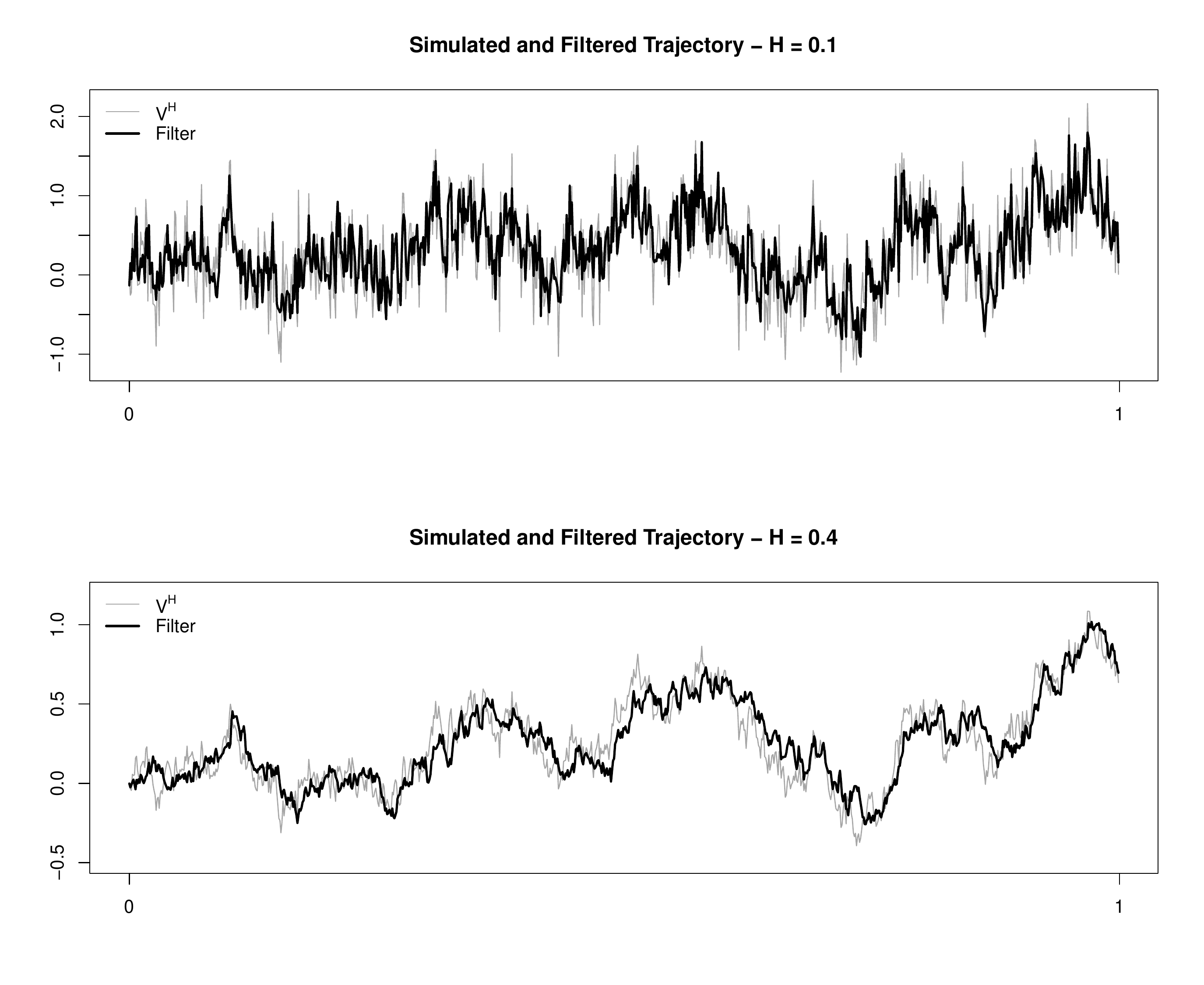}
\caption{True (gray) and filtered (black) state trajectory of $V^H$.}
\label{filt0104}
\end{center}
\end{figure}

\subsubsection*{Nested Particle Filter for Unknown $H$}\label{sec:simNPF}
Here we consider the Hurst index $H$ to be unknown and thus we estimate it accordingly by implementing the nested particle filter of Section~\ref{sub:NPF} with $K \cdot M = 300^2$ particles. We consider two cases for the true Hurst index, $H = 0.1$ and $H = 0.4$; however, unlike in the filtering experiment, we keep the number of OU processes building the approximation \eqref{eq:finapprox} equal in both cases. This number depends only on the total number of time steps $N = 960$, and it is given by Equation~\eqref{eq:Jchoiceii} as $J(960) = 63$.

Figure~\ref{mre0104} shows -- on a log scale -- the average, over 50 independent simulations, of the relative error $|\hat{H}_n - H|/H$, $n = 1, \dots, N$, in the case in which the true Hurst index $H$ is given by $H = 0.1$ (upper panel) resp. $H = 0.4$ (lower panel). We denote by $\hat{H}_n$ the mean of the posterior distribution $\mu_n$ obtained in a given run at the $n^{\text{th}}$ time step. Overall, the behavior of parameter estimates over time is quite satisfactory, as the estimation problem at hand is relatively complex (particularly when considering such a fine time grid and only one continuous-time unit).
\begin{figure}[h]
\begin{center}
\centering
\captionsetup{justification=centering}
\includegraphics[scale=0.4]{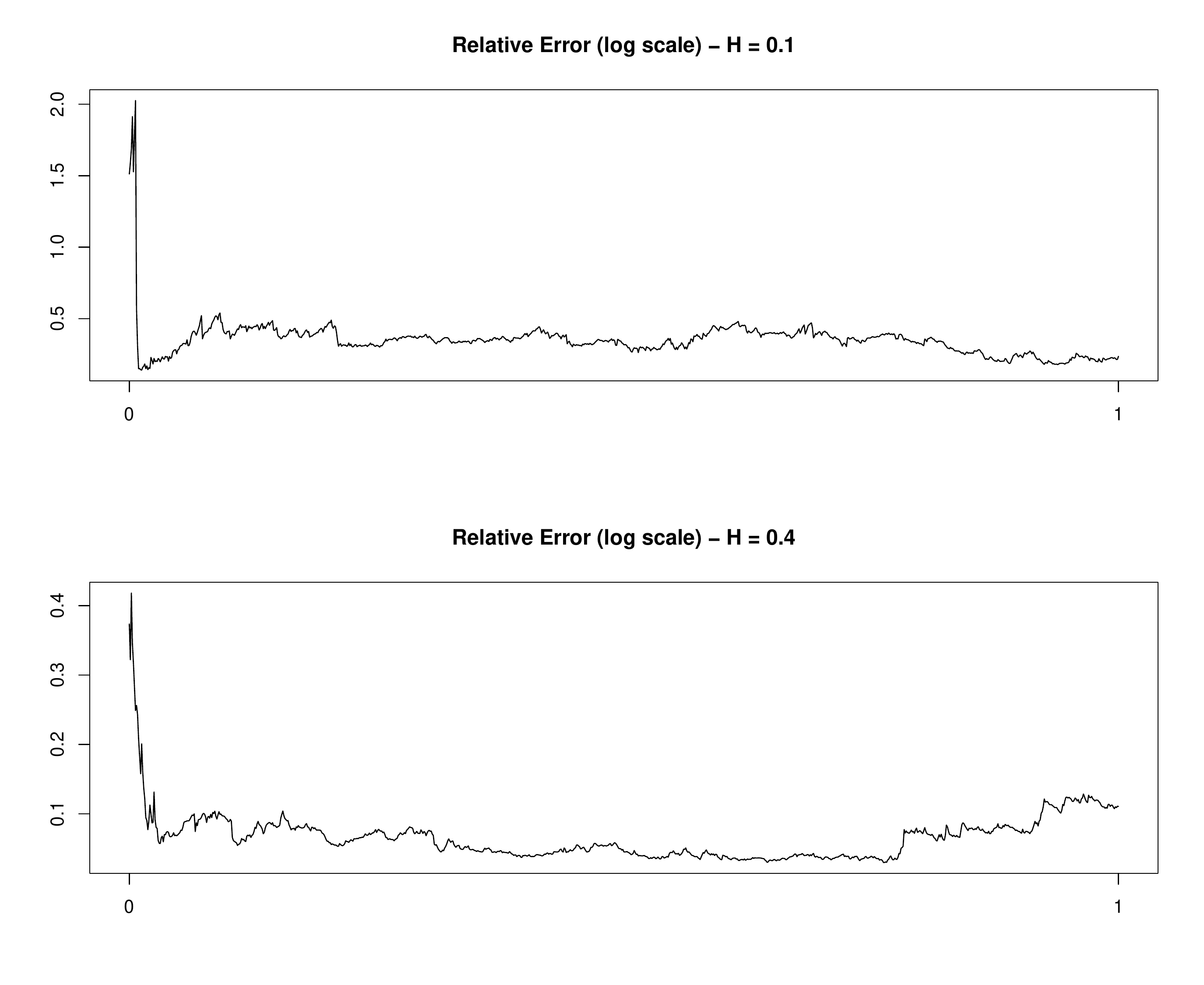}
\caption{Mean relative error (log scale) in parameter estimation.}
\label{mre0104}
\end{center}
\end{figure}
 
\subsubsection*{Sensitivity w.r.t. Value of $b$}
The biggest influence on the accuracy of the algorithms is that of the parameter $b$, which is why we focus on it next. Obviously, the number of particles ($M$ resp. $K \cdot M$) has also an effect, in the sense that the larger the number of particles, the better the accuracy of the algorithm. However, any reasonable, albeit quite conservative, choice would produce satisfactory filtering results.

Here we assume that the the true Hurst index is $H = 0.3$ and we analyze two cases, one in which the observation process is less informative (i.e., $b = 3000$) and one in which it is more informative (i.e., $b = 10000$). First, we consider the Hurst index to be known and thus we implement the bootstrap filter of Section~\ref{sub:BF} with $M = 300$ particles and -- as specified by Equation~\eqref{eq:Jchoicei} -- we use $J(0.3, 960) = 83$ OU processes to build the approximation \eqref{eq:finapprox}. Then, we consider $H$ to be unknown and we estimate it by implementing the nested particle filter of Section~\ref{sub:NPF} with $K \cdot M = 300^2$ particles, where the dimension of the approximation depends only on the total number of time steps $N$ and it is given by Equation~\eqref{eq:Jchoiceii} as $J(960) = 63$.

The upper panel of Figure~\ref{effectb} shows the unobserved trajectory of Liouville Brownian motion with Hurst index $H = 0.3$ (black solid line) and the area comprised between the $1\%$- and $99\%$-quantile of the posterior distribution of the state process. The lower panel of Figure~\ref{effectb} shows the posterior distribution, at final time, of the Hurst index in each case; the vertical lines indicate the respective means. In both plots, the lighter color corresponds to the less informative case, the darker one to the more informative one. In the lower panel, the parameter prior is also depicted (dotted line). We can observe how a higher value of $b$ improves the accuracy in both filtering and parameter estimation; however, results are still quite satisfactory in the less informative case. Note that, in order to make the modeling framework consistent with the large number of transactions characterizing high-frequency trading of liquid stocks, $b$ would typically take a large value in most real-world applications.
\begin{figure}[h]
\begin{center}
\centering
\captionsetup{justification=centering}
\includegraphics[scale=0.4]{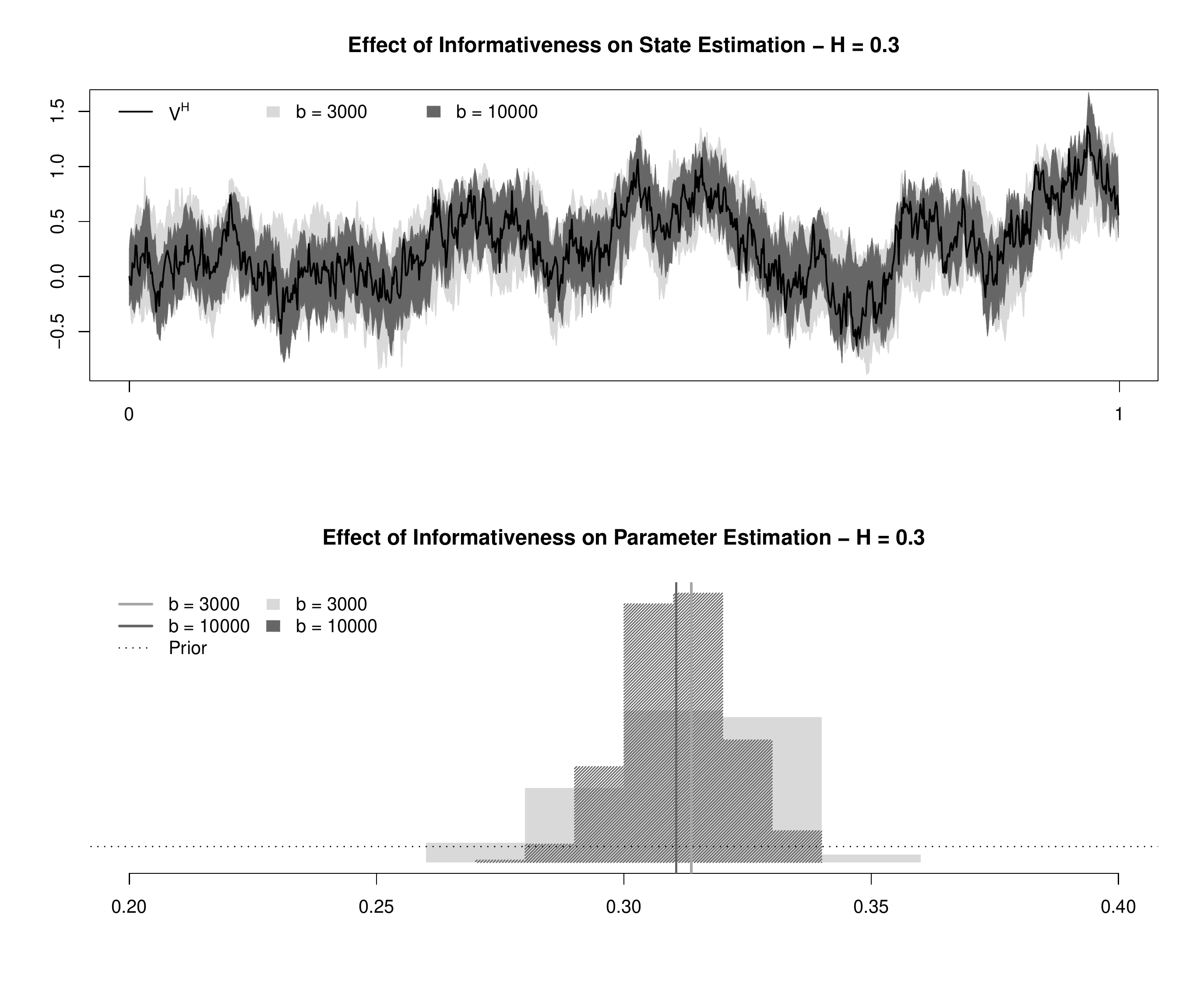}
\caption{True and filtered state trajectory of $V^H$ with $H = 0.3$ (upper panel) and posterior distribution of $H$ at final time (lower panel) in the cases $b = 10000$ resp. $b = 3000$.}
\label{effectb}
\end{center}
\end{figure}

\subsection{Experiments in the Context of Non-Rough Stochastic Volatility Models}\label{sec:norough}

In this section, in the spirit of \cite{rogers2023things} and \cite{cont2022rough}, we apply our methodology to synthetic data generated in a setting where (spot) volatility is driven by standard Brownian diffusions and assess whether the resulting estimates for $H$ are more consistent with such a setup, rather than mistakenly reflecting a spurious strong roughness effect. This amounts to modify the continuous-time model of Section~\ref{sub:CTFram}, discretize it in the same fashion as in Section~\ref{sub:DTmodel} and estimate the unobservable volatility trajectory and the Hurst index $H$ using the nested particle filter algorithm of Section~\ref{sub:NPF} using $K \cdot M = 300^2$ particles.

\subsubsection*{Volatility as Modulus of Brownian Motion} To be consistent with Section~4.1 of \cite{cont2022rough}, here we assume our observation process $D$ to have an intensity of the form (to be compared with \eqref{eq:int}):
\begin{equation}\label{eq:intnorough}
\lambda_t \defeq \lambda(W_t) = b \cdot |W_t|^2 \, ,
\end{equation}
where $b$ is a positive constant and $W$ is a standard Brownian motion, i.e. here $H = \nicefrac{1}{2}$. However, we consider the Hurst index $H$ to be unknown and thus we estimate it accordingly by implementing the nested particle filter of Section~\ref{sub:NPF}. We fix a time step of size $\Delta = \nicefrac{1}{480}$ (that is, we consider one-minute intervals in a trading day of $8$ hours) and a time horizon of $T = 5$ days, so that $N = 2400$. Given the total number of time steps $N$, Equation~\eqref{eq:Jchoiceii} gives $J(2400) = 88$ for the number of OU processes building the approximation \eqref{eq:finapprox}.

It is very important to stress that this experiment aims at assessing whether the resulting estimate for $H$ reflects a spurious strong roughness effect, attributable for instance to microstructure noise, rather than at estimating $H$ correctly. In fact, the value $H = \nicefrac{1}{2}$ can \textit{never} be estimated correctly without modifying the methodology and the algorithm substantially, as the approximations \eqref{eq:finapproxWH} and \eqref{eq:finapprox} are valid for $H < \nicefrac{1}{2}$ and the parameter space of nested particle filtering algorithm of Section~\ref{sub:NPF} is assumed to be a compact subset of $(0, \nicefrac{1}{2})$.

The upper panel of Figure~\ref{modulus} shows a simulated trajectory of such $|W|$ (gray line) and its corresponding filtered estimate (black line). The lower panel shows the mean (dark gray line) and the $1\%$- resp. $99\%$- quantiles (light gray lines) of the estimated posterior $\mu_n$, $n = 1, \dots, N$, of the unknown parameter $H$, when the underlying unobservable process is (the modulus of) a standard Brownian motion, i.e. $H = \nicefrac{1}{2}$.
\begin{figure}[h]
\begin{center}
\centering
\captionsetup{justification=centering}
\includegraphics[scale=0.4]{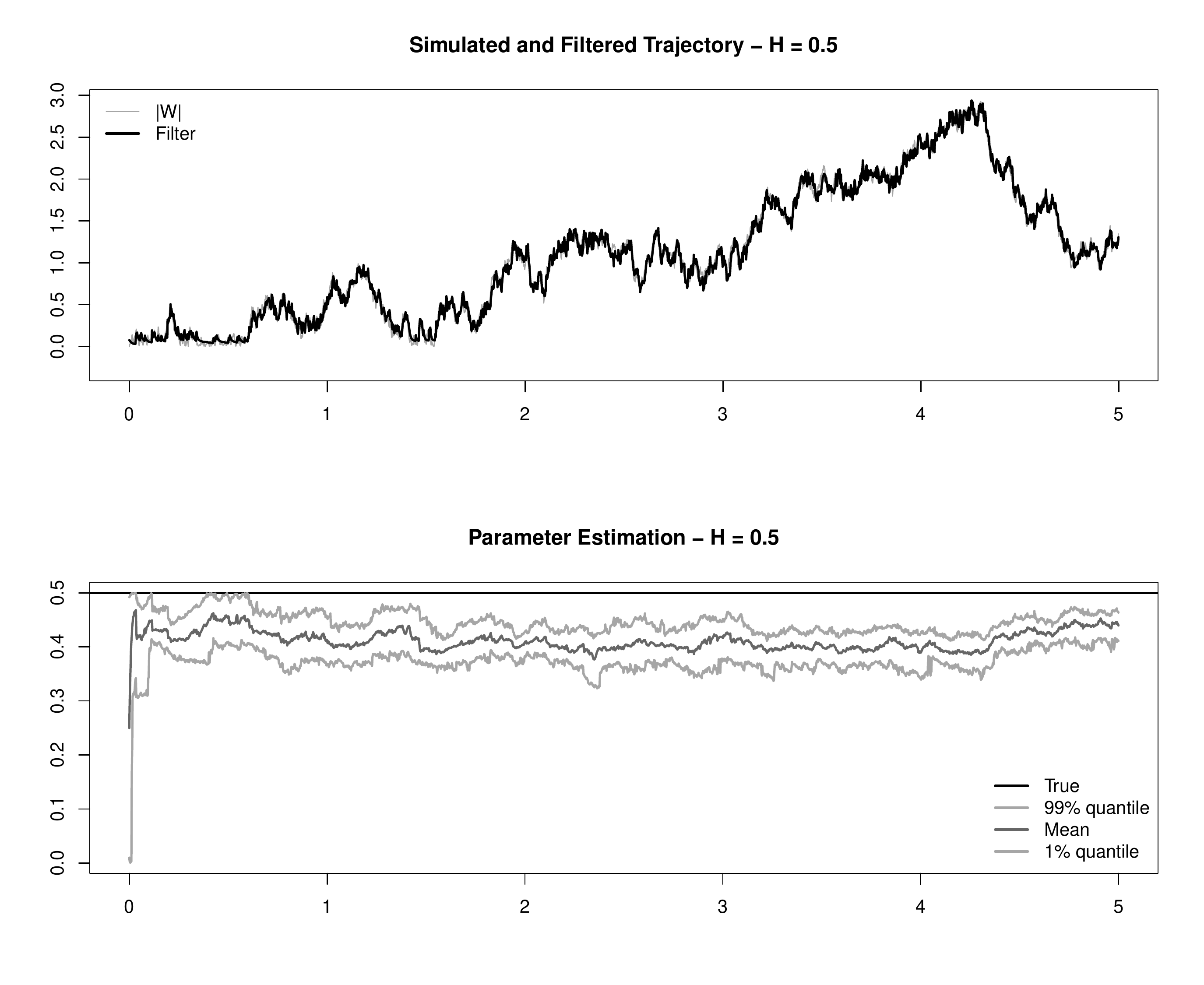}
\caption{True and filtered state trajectory of $|W|$ and behavior of parameter estimates over time.}
\label{modulus}
\end{center}
\end{figure}

Overall, considering that $H = \nicefrac{1}{2}$ can never be estimated in our current setup, the results are quite satisfactory, both in terms of filtered trajectory and -- especially -- in terms of parameter estimate, since the latter is not falsely identified to be close to zero, but rather it nears $\nicefrac{1}{2}$.

\subsubsection*{OU-OU Model for Volatility} To be consistent with Section~4 of \cite{rogers2023things}, here we assume our observation process $D$ to have an intensity of the form (to be compared with \eqref{eq:int}):
\begin{equation}\label{eq:intOUOU}
\lambda_t \defeq \lambda(V_t) = b \cdot V_t^2 \, ,
\end{equation}
where $V$ is given by the so-called OU-OU model:
\begin{equation}\label{eq:VOUOU}
\begin{split}
dR_t =  -  \beta R_t \, dt + \sigma_R \, dW_t' \, , \\
dV_t =  \kappa (R_t - V_t) \, dt + \sigma_V \, dW_t \, .
\end{split}
\end{equation}
To simulate model and observations, we choose the same parameters as in the original paper; that is, $\sigma_V^2 = 20$, $\sigma_R^2 = 0.625$, $\kappa = 210$ and $\beta = 2.5$. However, when implementing our estimation methodology, we suppose that the unobservable process $V$ is not of its `true' form \eqref{eq:VOUOU}, but rather a Liouville Brownian motion with unknown Hurst index $H$, which we estimate using the nested particle filter of Section~\ref{sub:NPF}. We fix a time horizon of $T = 5$ days and make two distinct choices for the time step $\Delta$ to analyze its impact on the estimation procedure. In particular, we consider $\Delta = \nicefrac{1}{240}$ (corresponding to two-minute intervals in a trading day of $8$ hours) and $\Delta = \nicefrac{1}{960}$ (corresponding to thirty-second intervals in a trading day of $8$ hours); thus, the total number of steps is $N = 1200$ resp. $N = 4800$ and Equation~\eqref{eq:Jchoiceii} gives $J(1200) = 68$ resp. $J(4800) = 112$ for the number of OU processes building the approximation \eqref{eq:finapprox}.

Experiments show that here the estimates for $H$ are more consistent with a rough volatility model, meaning that the proposed OU-OU model seem to mimic that type of behavior. However, if we simulate the model on a finer time grid and apply the corresponding algorithm, the estimate for the Hurst index increases. The upper panel of Figure~\ref{rogers}, in particular, shows the posterior distribution, at final time, of the Hurst index estimated from the OU-OU model in the case $\Delta = \nicefrac{1}{240}$ and in the case $\Delta = \nicefrac{1}{960}$; the vertical lines indicate the respective means. The lower panel shows the corresponding posterior distributions for a model of intensity of the form \eqref{eq:intOUOU} above, i.e. $\lambda_t \defeq \lambda(V_t^H) = b \cdot \left(V_t^H\right)^2$, where the driving process $V^H$ is of the form specified in Equation~\eqref{eq:LiouBM} with $H = 0.2$ (roughly corresponding to the average of the estimates obtained for the OU-OU model on the two different time grids). In both plots, the lighter color corresponds to the first case, the darker one to the more finely discretized one and the dotted line to the parameter prior. We can observe how Hurst index estimates resulting from applying the algorithm to observations generated from the OU-OU model vary strongly depending on the fineness of the chosen time scale (in particular, estimates increases -- and thus mimic less roughness -- the finer the time grid is), while this is not the case when observations are generated from the `true' model.

\begin{figure}[h]
\begin{center}
\centering
\captionsetup{justification=centering}
\includegraphics[scale=0.4]{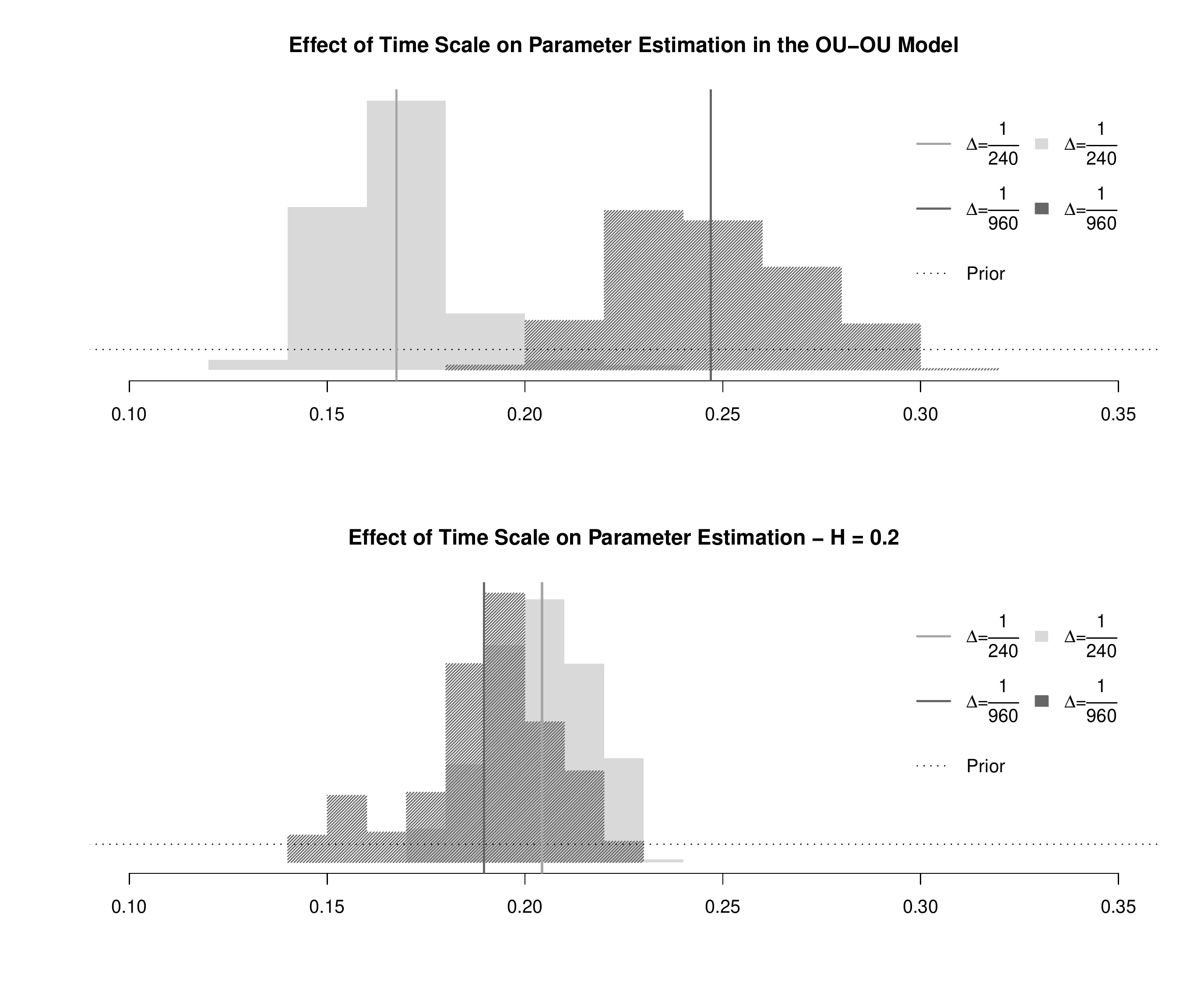}
\caption{Posterior distribution of $H$ at final time in the cases $\Delta = \nicefrac{1}{240}$ and $\Delta = \nicefrac{1}{960}$.}
\label{rogers}
\end{center}
\end{figure}

\section{Conclusions and Outlook}\label{section:concl}
In this paper, we discuss filtering and parameter estimation in a rough volatility model for high-frequency data. We consider a time-discretized version of a continuous-time framework where observations are given by a trajectory of a Cox process whose intensity is driven by a `rough' process, such as fractional resp. Liouville Brownian motion with Hurst parameter $H < \nicefrac{1}{2}$. Basing ourselves on a representation of such signal as a superposition of Ornstein Uhlenbeck processes (as introduced by \cite{carmona1998fractional}), we detail how it can be filtered using standard particle filtering techniques. Moreover, we explain how to estimate $H$ using the \textit{nested particle filtering} algorithm of \cite{crisan2018nested}. We run a comprehensive simulation study to test the accuracy of the algorithms and their sensitivity with respect to specific modelling choices. We find that the results are satisfactory; in particular, we are able to estimate the Hurst index with reasonable accuracy, while still taking the unobservability of volatility fully into account. Moreover, the parameter estimation methodology seems to be able to adequately distinguish between `true' rough dynamics in spot volatility and spurious roughness effects arising from microstructure noise, as documented e.g. in \cite{cont2022rough}.

From an implementation point of view, a possible extension involves modifying the algorithms so that they do not perform resampling at every step as they currently do, but only when the diversity in the particles is reduced beyond a certain threshold. This is typically done by estimating the so-called Effective Sample Size (ESS), although in the setting of the \textit{nested particle filtering} algorithm of \cite{crisan2018nested} the usual estimator of ESS should be modified to accommodate the special characteristics of the algorithm and to avoid becoming uninformative (for further reading about the \textit{normalized} ESS estimator devised by \cite{crisan2018nested}, we refer the reader to Section~5.6 of that paper).

Moreover, the proposed methodology should be applied to real data. It would be particularly interesting to compare estimates of $H$ obtained using this methodology to those obtained from \cite{volrough}, \cite{bennedsen2021decoupling}, \cite{fukasawa2019volatility} and \cite{cont2022rough}, among others. However, as briefly touched upon in Section~\ref{sec:IntrfBM}, the \textit{long-memory} feature of volatility has been investigated in several empirical studies (see, for instance, \cite{andersen2001distribution}) and it is often considered a stylized fact. In the modelling framework considered so far, it is not possible to capture \textit{persistence}: as explained in \cite{gneiting2004stochastic}, models based on processes with the self-similarity property (such as fractional Brownian motion) cannot account for both roughness and long-run dependence. Moreover, neither fractional nor Liouville Brownian motion are stationary, which represents a problem when one seeks to study the long-term properties of such processes. Therefore, a possible extension of the present work is to find an alternative model that allows for rough paths and for a polynomial decay of the autocorrelation function $\rho(s)$ for $s \to \infty$ (that is, for long memory). This is very much in the spirit of \cite{bennedsen2021decoupling}, who advocate the use of the so-called Brownian semistationary processes (see \cite{barndorff2009brownian}) to model volatility. Preliminary results indicate that it is possible to modify the previously-considered modelling framework in a way such that it accommodates both effects, while preserving a structure suitable for the filtering and estimation strategies presented in Section~\ref{section:PF}; however, a detailed analysis and a comprehensive simulation study for this alternative model are topics of future research.

Finally, it is important to note that models where underlying random factors are driven by fractional Brownian motion do not only find their applications in financial settings: among several examples, we highlight the recent work by \cite{alos2020fractional}, which focuses on the COVID-19 pandemic, and that by \cite{leppanen2021sailing}, which concerns the analysis of functional magnetic resonance imaging (fMRI) scans. Therefore, the methodology presented in this paper might be applied to estimate fractional models in a variety of contexts beyond the present one.

\appendix
\section{Further Results}\label{app}
\subsection{Relationship to Stochastic Volatility}
\begin{proposition}\label{prop:intvol}
Consider a sequence of models indexed by $p$ and given by $S_t^p = S_0 + \sum_{i = 1}^{D_t^p} \nu_i^p$, where for all $p$
\begin{itemize}
\item[(i)] $D^p$ is a doubly-stochastic Poisson process with intensity $\lambda^p(X_t)$;
\item[(ii)] $\{\nu_i\}_{i \in \mathbb{N}}$ are i.i.d. with zero mean and variance given by $(\sigma^p)^2$;
\item[(iii)] $(\nu_i^p)^2 \leq \bar{c}_p$ for a sequence $\bar{c}_p \to 0$;
\item[(iv)] $(\sigma^p)^2 \lambda^p(x) \to \sigma^2 \lambda(x)$, uniformly on compacts.
\end{itemize}
Then the pair $(S^p, X)$ converges in distribution to $(S, X)$, where $S$ solves
\[
dS_t = \sigma \sqrt{\lambda(X_t)} \, dW_t
\]
for a standard Brownian motion $W$.
\end{proposition}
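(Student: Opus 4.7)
The plan is to view $S^p$ as a square-integrable martingale (in a filtration to which $X$ is adapted and the marks $\nu_i^p$ are conditionally independent of $X$), identify its predictable characteristics, and apply a functional central limit theorem for semimartingales in the spirit of Jacod and Shiryaev, jointly with the fixed process $X$.

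First, I would introduce the natural filtration $\mathbb{G}$ generated by $X$ and, for each $p$, by the point process $D^p$ together with the marks $\nu_i^p$. Since the $\nu_i^p$ are centred and independent of $D^p$ and of $X$, the process $S^p$ is a locally square-integrable $\mathbb{G}$-martingale whose predictable quadratic variation is
\begin{equation*}
\langle S^p \rangle_t \;=\; (\sigma^p)^2 \int_0^t \lambda^p(X_s)\,ds,
\end{equation*}
and whose jumps satisfy $|\Delta S^p_t| \le \sqrt{\bar c_p}$ by assumption (iii). Assumption (iv) combined with the continuity of sample paths of $X$ (recall $X$ is a finite linear combination of Ornstein--Uhlenbeck processes, hence has a continuous modification) yields, for every $t \ge 0$,
\begin{equation*}
(\sigma^p)^2 \int_0^t \lambda^p(X_s)\,ds \;\longrightarrow\; \sigma^2 \int_0^t \lambda(X_s)\,ds \qquad \text{in probability, uniformly on compacts,}
\end{equation*}
because the paths of $X$ stay in a compact set on $[0,t]$ almost surely, where convergence of $(\sigma^p)^2\lambda^p$ to $\sigma^2\lambda$ is uniform.

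Next, I would verify the Lindeberg-type condition for the jump part: for every $\varepsilon > 0$, the bound $|\Delta S^p|\le\sqrt{\bar c_p}\to 0$ implies that, for $p$ large enough, $\mathbf{1}_{\{|\Delta S^p_s|>\varepsilon\}}\equiv 0$, so the truncated characteristics coincide with the untruncated ones and all contributions from jumps of size exceeding $\varepsilon$ vanish identically. Together with the convergence of $\langle S^p\rangle$ to a continuous, nondecreasing process, the Jacod--Shiryaev functional CLT for semimartingales (Theorem IX.3.27 of their book, or the martingale CLT of Rebolledo) delivers the $\mathbb{G}$-stable convergence of $S^p$ to a continuous martingale $\tilde S$ with $\langle \tilde S\rangle_t = \sigma^2\int_0^t \lambda(X_s)\,ds$. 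Because stable convergence implies joint convergence with any fixed random element, we obtain $(S^p,X) \Rightarrow (\tilde S, X)$ in the Skorokhod topology.

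Finally, since $\tilde S$ is a continuous martingale with absolutely continuous quadratic variation, a standard representation (on an enlargement of the probability space if $\lambda$ is allowed to vanish) produces a standard Brownian motion $W$ such that $\tilde S_t = \int_0^t \sigma\sqrt{\lambda(X_s)}\,dW_s$, identifying the limit as the announced diffusion. The main obstacle I anticipate is the joint convergence rather than mere marginal convergence of $S^p$: it is essential here to work on a filtration that contains $X$ and to invoke a \emph{stable} (rather than merely weak) convergence theorem, so that the limiting object $\tilde S$ is automatically compatible with $X$; everything else reduces to routine checks of the assumptions (i)--(iv).
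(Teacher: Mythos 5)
Your proposal is correct and rests on the same engine as the paper's proof, namely a functional martingale central limit theorem applied to $S^p$ with quadratic variation $(\sigma^p)^2\int_0^t\lambda^p(X_s)\,ds$ and jumps bounded by $\sqrt{\bar c_p}$; the verification of the two hypotheses (convergence of the bracket via (iv) and the Lindeberg condition via (iii)) is identical in substance. Where you genuinely diverge is in how the \emph{joint} convergence of $(S^p,X)$ is obtained. The paper first conditions on $\mathcal{F}_\infty^X=\sigma(X_s,\,s\ge 0)$, so that $D^p$ becomes an inhomogeneous Poisson process with deterministic intensity; the martingale CLT (Ethier--Kurtz, Theorem 7.4.1) is then applied to the conditional law, and joint convergence follows by integrating against the law of $X$. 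You instead work unconditionally on a filtration containing $X$ and invoke a stable convergence theorem (Jacod--Shiryaev IX.3.27 or Rebolledo), using the fact that $\mathbb{G}$-stable convergence yields joint convergence with any $\mathbb{G}$-measurable random element. Both routes are valid; the conditioning argument is slightly more elementary (no stable-convergence machinery needed, and the martingale property of $S^p$ and of $[S^p]-\langle S^p\rangle$ is immediate once the intensity is frozen), whereas your stable-convergence argument is more robust and would generalize to situations where the intensity is not conditionally deterministic. One small caution on your side: for $S^p$ to be a martingale in a filtration to which the Cox structure is adapted, you should include the whole path $\mathcal{F}_\infty^X$ in $\mathbb{G}_0$ (as the paper does), not merely $\mathcal{F}_t^X$ at time $t$; with the filtration $\mathcal{F}_t^{S^p}\vee\mathcal{F}_\infty^X$ your compensator formula is exactly the paper's, and the two proofs coincide up to the final limit-transfer step.
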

\begin{proof}[Sketch of proof.] We first condition on $\mathcal{F}_\infty^X = \sigma(X_s, s \geq 0)$, so that $D^p$ is a Poisson process with time-dependent intensity. Thus, consider the filtration $\{\mathcal{F}_t^S \vee \mathcal{F}_\infty^X\}_{t \geq 0}$. Note that, for each $p$, $S^p$ is a martingale on this filtration. Moreover, it holds that
\[
d(S_t^p)^2 = 2 S_{t^{-}}^p \, dS_t^p + d[S^p]_t \,,
\]
where $[S^p]_t = \sum_{i = 1}^{D_t^p} (\nu_i^p)^2$. That is, $[S^p]_t - \int_0^t (\sigma^p)^2 \lambda^p(X_s) \, ds$ is a martingale. It follows that $(S_t^p)^2 - \int_0^t (\sigma^p)^2 \lambda^p(X_s) \, ds$ is also a martingale.

By \textit{(iv)}, one has that $\int_0^t (\sigma^p)^2 \lambda^p(X_s) \, ds \to  \int_0^t \sigma^2 \lambda(X_s) \, ds$. Moreover, for fixed $T$, \textit{(iii)} gives
\[
\sup_{t \leq T} \left(S_t^p - S_{t^{-}}^p\right)^2 = \sup_{i \leq D_T^p} (\nu_i^p)^2 \leq \bar{c}_p \to 0 \, .
\]

Hence, Theorem 7.4.1 in \cite{ethier1986markov} gives that the law of $S^p$ given $\mathcal{F}_\infty^X$ converges to the law of the process $S$ with $dS_t = \sigma \sqrt{\lambda(X_t)} \, dW_t$. The claim then follows by integrating with respect to the law of $X$.
\end{proof}

Note that above we assume a linear model for the stock price itself, which could theoretically lead to negative prices; however, when considering a short time horizon this possibility can safely be ignored. When analyzing the model on a longer time horizon, the linear model should rather be used for the log price.

\subsection{Continuity of the filter in the Hurst index} Recall that, in our setup, the state process is in fact a functional of $J$ (time-discretized) OU processes driven by the same Brownian motion. In what follows we write $\bm{Z}_n \defeq (Z_n^1, \dots, Z_n^J)^T$, where for $n = 1, \dots N$ and $j = 1, \dots J$ we have set $Z_n^j = Z_{t_{n - 1}}^j$ (as described in Section~\ref{sub:DTmodel}). For fixed $H$, we consider mean-reversion speeds $\kappa_j \defeq \kappa_j(H)$ and coefficients $c_j \defeq c_j(H)$, $j = 1, \dots, J$, and we assume that they are continuous in $H$. Moreover, we let
\[
\lambda_n \defeq \lambda(\bm{Z}_n, H) = b \exp\left(\sum_{j = 1}^J c_j Z_n^j\right) \, .
\]

As for the observations, $y_n$ denotes the number of jumps over $(t_{n - 1}, t_n]$, $n = 1, \dots, N$. Thus, in our setting, $y_1, \dots, y_n$ are conditionally independent given $\bm{Z}_1, \dots, \bm{Z}_n$. Then, since for $n = 1, \dots, N$, $y_n \sim \text{Pois}(\lambda(\bm{Z}_n, H) \Delta)$, we can define the likelihood
\begin{equation}\label{eq:likPois}
g(y \vert z, H) = \exp\left(- \lambda(z, H) \Delta \right) \frac{(\lambda(z, H) \Delta)^y}{y!} \, .
\end{equation}

Now we can state the following continuity result, which is weaker than the Lipschitz continuity of \cite{crisan2018nested}, but nonetheless useful in our context.

\begin{proposition}\label{prop:cont}
Consider a sequence $H_h \to H$, $h \to \infty$ and fix a discrete time point $n \leq N$. Then, for $f: \mathbb{R}^J \to \mathbb{R}$ bounded and continuous, we have
\begin{equation}\label{eq:contparfilter}
\lim_{h \to \infty} \mathbb{E}\left[f(\bm{Z}_n) \vert y_1, \dots, y_n; H_h \right] = \mathbb{E}\left[f(\bm{Z}_n) \vert y_1, \dots, y_n; H \right] \, .
\end{equation}
\end{proposition}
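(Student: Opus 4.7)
The plan is to write the conditional expectation via Bayes' rule as a ratio of integrals against the joint law of $\bm{Z}_{1:n}$, and then to exploit the fact that all the OU components are driven by the same Brownian motion so as to express $\bm{Z}_{1:n}$ as an explicit, parameter-continuous function of a finite collection of standard normals whose joint distribution does not depend on $H$. Once this reparametrisation is in place, pointwise convergence in $H$ together with a uniform bound on the Poisson likelihood will reduce the claim to the dominated convergence theorem.

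More precisely, using the exact updating formula \eqref{eq:updateOU}, each $Z_k^j$ can be realised jointly as $Z_k^j = \phi_k^j(H, \bm{v}, \bm{Z}_0)$, where $\bm{v} = (v_1, \dots, v_n)$ is a vector of i.i.d.\ standard normals (with $H$-free law) and $\phi_k^j$ is continuous in $H$ on the parameter space. This continuity is a consequence of the assumed continuity of $\kappa_j(H)$ and $c_j(H)$ and of the smoothness of the single-step map $(z, \kappa) \mapsto z e^{-\kappa \Delta} + \sqrt{(1 - e^{-2\kappa\Delta})/(2\kappa)}\, v$ in $\kappa > 0$, propagated through the $n$-step recursion. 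Writing $\Phi_k(H, \cdot) = (\phi_k^1, \dots, \phi_k^J)^T$ and $G_n(H, \cdot) \defeq \prod_{k=1}^n g\bigl(y_k \mid \Phi_k(H, \cdot), H\bigr)$, with $g$ as in \eqref{eq:likPois}, I would rewrite
\begin{equation*}
\mathbb{E}\bigl[f(\bm{Z}_n) \mid y_{1:n};\, H\bigr]
= \frac{\mathbb{E}\bigl[f\bigl(\Phi_n(H, \bm{v}, \bm{Z}_0)\bigr)\, G_n(H, \bm{v}, \bm{Z}_0)\bigr]}{\mathbb{E}\bigl[G_n(H, \bm{v}, \bm{Z}_0)\bigr]},
\end{equation*}
where both expectations are now against the $H$-free law of $(\bm{v}, \bm{Z}_0)$.

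For any sequence $H_h \to H$ and any fixed $(\bm{v}, \bm{Z}_0)$, continuity of $\phi_k^j$ in $H$, joint continuity of $(z, H) \mapsto g(y \mid z, H)$ (immediate from \eqref{eq:likPois}, since $\lambda(z, H)$ is continuous in $(z, H)$), and continuity of $f$ yield pointwise convergence of both integrands. For the dominating function, for any fixed $y \in \mathbb{N}$ one has
\begin{equation*}
g(y \mid z, H) = e^{-\lambda \Delta}\, \frac{(\lambda \Delta)^y}{y!} \leq \frac{y^y e^{-y}}{y!} \defeq C_y,
\end{equation*}
by maximising over $\lambda \geq 0$, a bound that is uniform in $(z, H)$. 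Consequently the integrands are dominated by the deterministic constants $\|f\|_\infty \prod_{k=1}^n C_{y_k}$ and $\prod_{k=1}^n C_{y_k}$, and the dominated convergence theorem applies to both numerator and denominator along the sequence $H_h$.

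The remaining step, which I expect to be the easiest, is to check that the denominator is strictly positive at the limit so that one may divide: since $g(y \mid z, H) > 0$ everywhere (the Poisson mass function never vanishes), $G_n(H, \bm{v}, \bm{Z}_0) > 0$ almost surely, hence $\mathbb{E}[G_n(H, \cdot)] > 0$, and \eqref{eq:contparfilter} follows. The main (mild) obstacle is really the first one, namely cleanly verifying the parameter-continuity of the representation $\Phi_k$; this is not deep, but it is the place where the choice to drive all OU processes by a common Brownian motion pays off, since otherwise one would have to couple different filtrations across different values of $H$.
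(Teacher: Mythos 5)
Your proof is correct, and it reaches the conclusion by a genuinely different route from the paper. The paper also starts from the Kallianpur--Striebel ratio, but then argues at the level of \emph{laws}: continuity of the $\kappa_j$ in $H$ gives weak convergence of the law of $\bm{Z}_1,\dots,\bm{Z}_N$ under $H_h$ to that under $H$, and because the test function $z_{1:n}\mapsto f(z_n)\prod_i g(y_i\vert z_i,H_h)$ itself varies with $h$ (with only \emph{locally} uniform convergence of $\lambda(\cdot,H_h)$), the paper inserts a truncation to the compact sets $\mathcal{Z}^L$ and controls the complement uniformly in $h$ using Gaussian tails. You instead couple all parameter values on a single probability space by pushing the exact updating recursion \eqref{eq:updateOU} through $H$-free standard normals, which turns the problem into pointwise convergence of a single integrand plus dominated convergence, with the clean uniform bound $\sup_{\lambda\ge 0} e^{-\lambda\Delta}(\lambda\Delta)^y/y! = y^y e^{-y}/y!$ making the domination trivial; you also verify positivity of the denominator, which the paper leaves implicit. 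Your argument is more elementary and self-contained, and it is exactly the common-noise structure of the model that makes the coupling exact rather than merely a Skorokhod representation; the paper's argument is more abstract and would survive in settings where no such explicit parameter-continuous representation of the state is available. Two points you rely on tacitly and should state: the law of $(\bm{v},\bm{Z}_0)$ must not depend on $H$ (true here, since the OU components start at zero or from an $H$-independent prior), and the dimension $J$ must be held fixed along the sequence $H_h$ (as in choice \eqref{eq:Jchoiceii}; with choice \eqref{eq:Jchoicei} this holds for $h$ large except at the discontinuity points of $H\mapsto J(N,H)$) --- both are implicit in the statement itself, since $f$ is defined on a fixed $\mathbb{R}^J$.
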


\begin{proof}
First note that, for generic $H$, we have the following formula (of Kallianpur-Striebel type):
\[
\mathbb{E}\left[f(\bm{Z}_n) \vert y_1, \dots, y_n; H \right] = \frac{\mathbb{E}\left[f(\bm{Z}_n) \prod_{i = 1}^n g(y_i \vert \bm{Z}_i, H) \vert H \right]}{\mathbb{E}\left[\prod_{i = 1}^n g(y_i \vert \bm{Z}_i, H) \vert H\right]} \, .
\]

To establish convergence, it suffices to consider the numerator. Define the set
\[
\mathcal{Z}^L = \left\{\bm{Z}_1, \dots, \bm{Z}_N : \bm{Z}_n \in \mathbb{R}^J, \Vert \bm{Z}_n \Vert \leq L \quad \forall \, n = 1, \dots, N \right\} \, .
\]
As the $\bm{Z}_n$ are Gaussian with bounded variance, for fixed $y_1, \dots, y_N$ it holds that
\[
\lim_{L \to \infty} \sup_h \mathbb{E}\left[\prod_{i = 1}^n g(y_i \vert \bm{Z}_i, H_h) \, ; \, \left(\mathcal{Z}^L\right)^\mathsf{c}  \vert H_h \right] = 0 \, .
\]

Now, since $\kappa_j$, $j = 1, \dots, J$ are continuous in $H$, the law of $\bm{Z}_1, \dots, \bm{Z}_N$ given $H_h$ converges to the law of $\bm{Z}_1, \dots, \bm{Z}_N$ given $H$. Moreover, for the intensity it holds that $\lambda(z, H_h) \to \lambda(z, H)$ locally uniformly for $h \to \infty$. This gives
\[
\lim_{h \to \infty} \mathbb{E}\left[f(\bm{Z}_n) \prod_{i = 1}^n g(y_i \vert \bm{Z}_i, H_h)  \vert H_h\right] = \mathbb{E}\left[f(\bm{Z}_n) \prod_{i = 1}^n g(y_i \vert \bm{Z}_i, H)  \vert H \right]
\]
by standard arguments.
\end{proof}

\bibliographystyle{plainnat}
\addcontentsline{toc}{chapter}{References}
\bibliography{TeX}

\begin{thebibliography}{23}
\providecommand{\natexlab}[1]{#1}
\providecommand{\url}[1]{\texttt{#1}}
\expandafter\ifx\csname urlstyle\endcsname\relax
  \providecommand{\doi}[1]{doi: #1}\else
  \providecommand{\doi}{doi: \begingroup \urlstyle{rm}\Url}\fi

\bibitem[Al{\`o}s et~al.(2020)Al{\`o}s, Mancino, Merino, and
  Sanfelici]{alos2020fractional}
E.~Al{\`o}s, M.~E. Mancino, R.~Merino, and S.~Sanfelici.
\newblock A fractional model for the {COVID}-19 pandemic: {A}pplication to
  {I}talian data.
\newblock \emph{Stochastic Analysis and Applications}, pages 1--19, 2020.

\bibitem[Andersen et~al.(2001)Andersen, Bollerslev, Diebold, and
  Ebens]{andersen2001distribution}
T.~G. Andersen, T.~Bollerslev, F.~X. Diebold, and H.~Ebens.
\newblock The distribution of realized stock return volatility.
\newblock \emph{Journal of Financial Economics}, 61\penalty0 (1):\penalty0
  43--76, 2001.

\bibitem[Barndorff-Nielsen and Schmiegel(2009)]{barndorff2009brownian}
O.~Barndorff-Nielsen and J.~Schmiegel.
\newblock Brownian semistationary processes and volatility/intermittency.
\newblock \emph{Advanced Financial Modelling}, 8:\penalty0 1--26, 2009.

\bibitem[Bennedsen et~al.(2021)Bennedsen, Lunde, and
  Pakkanen]{bennedsen2021decoupling}
M.~Bennedsen, A.~Lunde, and M.~Pakkanen.
\newblock {Decoupling the Short- and Long-Term Behavior of Stochastic
  Volatility}.
\newblock \emph{Journal of Financial Econometrics}, 2021.

\bibitem[Carmona and Coutin(1998)]{carmona1998fractional}
P.~Carmona and L.~Coutin.
\newblock Fractional {B}rownian motion and the {M}arkov property.
\newblock \emph{Electronic Communications in Probability}, 3:\penalty0 95--107,
  1998.

\bibitem[Carmona et~al.(2000)Carmona, Coutin, and
  Montseny]{carmona2000approximation}
P.~Carmona, L.~Coutin, and G.~Montseny.
\newblock Approximation of some {G}aussian processes.
\newblock \emph{Statistical Inference for Stochastic Processes}, 3\penalty0
  (1-2):\penalty0 161--171, 2000.

\bibitem[Comte and Renault(1998)]{comte1998long}
F.~Comte and E.~Renault.
\newblock Long memory in continuous-time stochastic volatility models.
\newblock \emph{Mathematical Finance}, 8\penalty0 (4):\penalty0 291--323, 1998.

\bibitem[Cont and Das(2022)]{cont2022rough}
R.~Cont and P.~Das.
\newblock Rough volatility: fact or artefact?
\newblock \emph{arXiv preprint arXiv:2203.13820}, 2022.

\bibitem[Coutin and Decreusefond(1999)]{coutin1999abstract}
L.~Coutin and L.~Decreusefond.
\newblock Abstract nonlinear filtering theory in the presence of fractional
  {B}rownian motion.
\newblock \emph{The Annals of Applied Probability}, 9\penalty0 (4):\penalty0
  1058--1090, 1999.

\bibitem[Coutin and Pontier(2007)]{coutin2007approximation}
L.~Coutin and M.~Pontier.
\newblock Approximation of the fractional {B}rownian sheet via
  {O}rnstein-{U}hlenbeck sheet.
\newblock \emph{ESAIM: Probability and Statistics}, 11:\penalty0 115--146,
  2007.

\bibitem[Crisan and Miguez(2018)]{crisan2018nested}
D.~Crisan and J.~Miguez.
\newblock Nested particle filters for online parameter estimation in
  discrete-time state-space {M}arkov models.
\newblock \emph{Bernoulli}, 24\penalty0 (4A):\penalty0 3039--3086, 2018.

\bibitem[Cvitani{\'c} et~al.(2006)Cvitani{\'c}, Liptser, and
  Rozovskii]{cvitanic2006filtering}
J.~Cvitani{\'c}, R.~Liptser, and B.~Rozovskii.
\newblock A filtering approach to tracking volatility from prices observed at
  random times.
\newblock \emph{The Annals of Applied Probability}, pages 1633--1652, 2006.

\bibitem[Decreusefond and {\"U}st{\"u}nel(1998)]{decreusefond1998fractional}
L.~Decreusefond and A.~S. {\"U}st{\"u}nel.
\newblock Fractional {B}rownian motion: theory and applications.
\newblock In \emph{ESAIM: Proceedings}, volume~5, pages 75--86, 1998.

\bibitem[Ethier and Kurtz(1986)]{ethier1986markov}
S.~Ethier and T.~Kurtz.
\newblock \emph{Markov processes: characterization and convergence}.
\newblock John Wiley \& Sons, 1986.

\bibitem[Frey and Runggaldier(2001)]{freyrungg2001}
R.~Frey and W.~Runggaldier.
\newblock A nonlinear filtering approach to volatility estimation with a view
  towards high frequency data.
\newblock \emph{International Journal of Theoretical and Applied Finance},
  4\penalty0 (02):\penalty0 199--210, 2001.

\bibitem[Fukasawa et~al.(2019)Fukasawa, Takabatake, and
  Westphal]{fukasawa2019volatility}
M.~Fukasawa, T.~Takabatake, and R.~Westphal.
\newblock Is volatility rough?
\newblock \emph{arXiv preprint arXiv:1905.04852}, 2019.

\bibitem[Gatheral et~al.(2018)Gatheral, Jaisson, and Rosenbaum]{volrough}
J.~Gatheral, T.~Jaisson, and M.~Rosenbaum.
\newblock Volatility is rough.
\newblock \emph{Quantitative Finance}, 18\penalty0 (6):\penalty0 933--949,
  2018.

\bibitem[Gillespie(1996)]{gillespie1996exact}
D.~Gillespie.
\newblock Exact numerical simulation of the {O}rnstein-{U}hlenbeck process and
  its integral.
\newblock \emph{Physical Review E}, 54\penalty0 (2):\penalty0 2084, 1996.

\bibitem[Gneiting and Schlather(2004)]{gneiting2004stochastic}
T.~Gneiting and M.~Schlather.
\newblock Stochastic models that separate fractal dimension and the hurst
  effect.
\newblock \emph{SIAM Review}, 46\penalty0 (2):\penalty0 269--282, 2004.

\bibitem[Harms(2019)]{harms2019strong}
P.~Harms.
\newblock Strong convergence rates for markovian representations of fractional
  brownian motion.
\newblock \emph{arXiv preprint arXiv:1902.01471}, 2019.

\bibitem[Kushner(1977)]{kushner1977probability}
H.~Kushner.
\newblock \emph{Probability methods for approximations in stochastic control
  and for elliptic equations}, volume 129.
\newblock Academic Press, 1977.

\bibitem[Leppanen et~al.(2021)Leppanen, Stone, Lythgoe, Williams, and
  Horvath]{leppanen2021sailing}
J.~Leppanen, H.~Stone, D.~J. Lythgoe, S.~Williams, and B.~Horvath.
\newblock Sailing in rough waters: examining volatility of f{MRI} noise.
\newblock \emph{Magnetic Resonance Imaging}, 78:\penalty0 69--79, 2021.

\bibitem[Rogers(2023)]{rogers2023things}
L.~C.~G. Rogers.
\newblock Things we think we know.
\newblock In \emph{Options—45 years since the Publication of the
  Black--Scholes--Merton Model: The Gershon Fintech Center Conference}, pages
  173--184. World Scientific, 2023.

\end{thebibliography}
\end{document}